  \newcommand{\Alp}{\textsf{Alph}}
  \newcommand{\proot}{\textsf{root}}
  \newcommand{\per}{\textsf{per}}
  \newcommand{\runs}{\textsf{runs}}
  \newcommand{\hpruns}{\textsf{hp-runs}}
  \newcommand{\nth}{\mbox{${}^{\textsl{\scriptsize th}}$}}
  \newcommand{\nd}{\mbox{${}^{\textsl{\scriptsize nd}}$}}
  \def\rdots{\mathinner{\ldotp\ldotp}}
  \date{}
  \author{\bf
    Maxime Crochemore\inst{1}\fnmsep\inst{3}
    \and
    Costas Iliopoulos\inst{1}\fnmsep\inst{4}
    \and
    Marcin Kubica\inst{2}
    \and \\
    Jakub Radoszewski\inst{2}
    \and
    Wojciech Rytter%
\thanks{Supported by grant N206 004 32/0806 of the Polish Ministry
 of Science and Higher Education.}
    \inst{2}\fnmsep\inst{5}
    \and
    Tomasz Wale\'n\inst{2}
  }
  \institute{
    Dept.~of Computer Science, King's College London, London WC2R 2LS, UK \\
    \email{[maxime.crochemore,csi]@kcl.ac.uk}
    \and
    Dept.~of Mathematics, Computer Science and Mechanics, \\
    University of Warsaw, Warsaw, Poland\\
    \email{[kubica,jrad,rytter,walen]@mimuw.edu.pl}
    \and
    Universit\'e Paris-Est, France
    \and
    Digital Ecosystems \& Business Intelligence Institute, \\
    Curtin University of Technology, Perth WA 6845, Australia
    \and
    Dept. of Math. and Informatics,\\
    Copernicus University, Toru\'n, Poland
  }
  \title{
      On the maximal number of \\ highly periodic runs in a string%
\thanks{Research supported in part by the Royal Society, UK.}
  }
\begin{document}
  \maketitle
\begin{abstract}
A run is a maximal occurrence of a repetition $v$ with a period $p$ such that
$2p \le |v|$.
The maximal number of runs in a string of length $n$ was studied by several
authors and it is known to be between $0.944 n$
and $1.029 n$.
We investigate highly periodic runs, in which the shortest period $p$
satisfies $3p \le |v|$.
We show the upper bound $0.5n$ on the maximal number of such runs in a string
of length $n$ and construct a sequence of words for which we obtain the lower
bound $0.406 n$.
\end{abstract}

\section{Introduction}
Repetitions and periodicities in strings are one of the fundamental topics in
combinatorics on words \cite{Karhumaki,Lothaire}.
They are also important in other areas: lossless compression, word
representation, computational biology etc.
Repetitions are studied from different directions: classification of
words not containing repetitions of a given exponent, efficient identification
of factors being repetitions of different types and finally computing the
bounds of the number of repetitions of a given exponent that a string may
contain, which we consider in this paper.
Both the known results in the topic and a deeper description of the
motivation can be found in the survey by Crochemore et al. \cite{Survey}.

The concept of runs (also called maximal repetitions) has been introduced to
represent all repetitions in a string in a succinct manner.
The crucial property of runs is that their maximal number in a string of
length $n$ (denoted as $\runs(n)$) is $O(n)$ \cite{KolpakovKucherov}.
Due to the work of many people, much better bounds on $\runs(n)$ have been
obtained.
The lower bound $0.927 n$ was first proved in \cite{FranekYang}.
Afterwards it was improved by Kusano et al.~\cite{Matsubara} to
$0.944 n$ employing computer experiments and very recently by Simpson
\cite{Simpson} to $0.944575712 n$.
On the other hand, the first explicit upper bound $5n$ was settled in
\cite{Rytter06}, afterwards it was systematically improved to
$3.44 n$ \cite{Rytter07},
$1.6 n$ \cite{CrochemoreIlie,CrochemoreIlie08} and
$1.52 n$ \cite{Giraud08}.
The best known result $\runs(n) \le 1.029 n$ is due to Crochemore et
al.~\cite{DBLP:conf/cpm/CrochemoreIT08}, but it is conjectured
\cite{KolpakovKucherov} that $\runs(n)<n$.
The maximal number of runs was also studied for special types of strings
and tight bounds were established for Fibonacci strings
\cite{KolpakovKucherov,RytterFib}
and more generally Sturmian strings \cite{BaturoPiatkowskiRytter}.

The combinatorial analysis of runs in strings is strongly related to the
problem of estimation of the maximal number of occurrences of squares in
a string.
In the latter the gap between the upper and lower bound is much larger than
for runs \cite{Survey,CrochemoreR95}.
However, a recent paper \cite{Iwoca} by some of the authors shows that
introduction of exponents larger than 2 can lead to obtaining tighter bounds
for the number of corresponding occurrences.

In this paper we introduce and study the concept of highly periodic runs
(hp-runs) in which the period is at least three times shorter than the run.
We show the following bounds on the number $\hpruns(n)$ of such runs
in a string of length $n$:
$$0.406 n\le \hpruns(n)\le \frac{n-1}2$$
The upper bound is achieved by analyzing prime words (i.e.~words that are
primitive and minimal/maximal in the class of their cyclic equivalents)
that appear as periods of hp-runs.
As for the lower bound, we give a simple argument that leads to $0.4 n$ bound
and then describe a family of words that improves this bound to $0.406 n$.

\section{Definitions}
We consider \emph{words} over a finite alphabet $A$, $u\in A^*$; 
by $\varepsilon$ we denote an empty word;
the positions in a word $u$ are numbered from $1$ to $|u|$.
By $\Alp(u)$ we denote the set of all letters of $u$.
For $u=u_1u_2\ldots u_m$, by $u[i\rdots j]$ we denote a \textit{factor}
of $u$ equal to $u_i\ldots u_j$ (in particular $u[i]=u[i\rdots i]$).
Words $u[1\rdots i]$ are called prefixes of $u$, and words $u[i\rdots m]$
--- suffixes of $u$.
We say that positive integer $p$ is the (shortest) \emph{period} of a word
$u=u_1\ldots u_m$ (notation: $p=\per(u)$) if $p$ is the smallest number
such that $u_i=u_{i+p}$ holds for all $1\le i\le m-p$.

If $w^k=u$ ($k$ is a non-negative integer) then we say that $u$
is the $k\nth$ power of the word $w$.
A \emph{square} is the $2\nd$ power of some word.
The \emph{primitive root} of a word $u$, denoted $\proot(u)$, is
the shortest such word $w$ that $w^k=u$ for some positive $k$.
We call a word $u$ \emph{primitive} if $\proot(u)=u$, otherwise 
it is called \emph{nonprimitive}.
We say that words $u$ and $v$ are cyclically equivalent (or that one of them
is a cyclic rotation of the other) if $u=xy$ and $v=yx$ for some $x,y\in A^*$.
It is a simple observation that if $u$ and $v$ are cyclically equivalent
then $\proot(u)=\proot(v)$.

Let us assume that $A$ is totally ordered by $\le$ what induces
a lexicographical order in $A^*$, also denoted by $\le$.
We say that $u\in A^*$ is a \emph{prime word} if it is primitive and minimal
or maximal in the class of words that are cyclically equivalent to it.
It can be proved \cite{Lothaire} that a prime word $u$ cannot have a proper
(i.e.~non-empty and different than $u$) prefix that would also be its
suffix.

A \emph{run} (also called a maximal repetition) in a string $u$ is an interval
$[i\rdots j]$ such that both the associated factor $u[i\rdots j]$ has period
$p$, $2p \le j-i+1$, and the property cannot be extended to the right nor to
the left: $u[i-1] \ne u[i+p-1]$ and $u[j-p+1] \ne u[j+1]$ when the letters are
defined.
A \emph{highly periodic run} (hp-run) is a run $[i\rdots j]$ for which the
shortest period $p$ satisfies $3p\le j-i+1$.
For simplicity, in the further text we sometimes refer to runs or hp-runs as
to occurrences of corresponding factors of $u$.

\section{Upper bound}
Let $u\in A^*$ be a word of length $n$.
By $P=\{p_1,p_2,\ldots,p_{n-1}\}$ we denote the set of inter-positions of $u$
that are located \emph{between} pairs of consecutive letters of $u$.

We define a function $F$ that assigns to each hp-run $v$ in a string the set
of \emph{handles} among all inter-positions within $v$.
Hence, $F$ is a mapping from the set of hp-runs occurring in $u$ to the set
$2^P$ of subsets of $P$.
Let $v$ be a hp-run with period $p$ and let $w$ be the prefix of $v$ of length $p$.
By $w_{min}$ and $w_{max}$ we denote words cyclically equivalent to $w$ that
are minimal and maximal in lexicographical order.
We define $F(v)$ as follows:
\begin{enumerate}[a)]
  \item if $w_{min}\ne w_{max}$ then $F(v)$ contains inter-positions between
  consecutive occurrences of $w_{min}$ and between consecutive
  occurrences of $w_{max}$ within $v$
  \item if $w_{min}=w_{max}$ then $F(v)$ contains all inter-positions
  within $v$.
\end{enumerate}

\begin{lemma}\label{l:wminwmax_prime}
$w_{min}$ and $w_{max}$ are prime words.
\end{lemma}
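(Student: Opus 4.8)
The plan is to show that $w_{min}$ and $w_{max}$ are both primitive and extremal (minimal/maximal) in their cyclic equivalence class, which is exactly the definition of a prime word. By construction $w_{min}$ and $w_{max}$ are cyclic rotations of the word $w$ of length $p$, and hence of each other, so they share a primitive root; thus it suffices to establish primitivity once and extremality follows directly from how they were chosen.

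First I would argue primitivity. The word $v$ is a hp-run whose shortest period is $p$, and $w = v[1\rdots p]$. The key fact is that $p = \per(v)$ is the \emph{shortest} period of the run. If $w$ were nonprimitive, say $w = z^k$ for some word $z$ with $|z| < p$ and $k \ge 2$, then $|z|$ would be a period of $w$, and since $w$ is a prefix of $v$ of length $p$ and $v$ has period $p$, the word $v$ would in fact have period $|z| < p$, contradicting the minimality of $p$. Hence $w$ is primitive. Since $w_{min}$ and $w_{max}$ are cyclic rotations of $w$, and a word is primitive if and only if each of its cyclic rotations is primitive (they all share the same primitive root, as noted in the Definitions section), both $w_{min}$ and $w_{max}$ are primitive.

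Next I would address extremality. This is immediate from the definition: $w_{min}$ is defined to be the minimal word in the lexicographic order among all cyclic rotations of $w$, and $w_{max}$ the maximal one. Since the class of cyclic rotations of $w$ coincides with the class of cyclic rotations of $w_{min}$ (and of $w_{max}$), the word $w_{min}$ is minimal in its own cyclic equivalence class and $w_{max}$ is maximal in its own cyclic equivalence class. Combining primitivity with minimality (respectively maximality) yields that each is a prime word.

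The main obstacle, though a mild one, is the primitivity step: one must carefully invoke that the shortest period $p$ of the run forces the length-$p$ prefix $w$ to be primitive, rather than assuming it outright. The cleanest formulation uses the contrapositive as above --- a nontrivial power structure in $w$ would propagate through the periodicity of $v$ to give a strictly smaller period, contradicting $p = \per(v)$. Everything else is a direct unwinding of definitions and the already-stated observation that cyclic rotations preserve the primitive root.
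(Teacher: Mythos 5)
Your proof is correct and follows essentially the same route as the paper's: extremality is immediate from the definition, so everything reduces to primitivity of $w$, which both arguments derive from the minimality of the period $p$ and then transfer to $w_{min}$ and $w_{max}$ via the invariance of the primitive root under cyclic rotation. The only difference is that you spell out the primitivity step (a nontrivial power $w=z^k$ would make $v$ a prefix of $z^\infty$, giving the smaller period $|z|$), which the paper merely asserts.
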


\begin{proof}
By the definition of $w_{min}$ and $w_{max}$, it suffices to show that both
words are primitive.
This follows from the fact that, due to the minimality of $p$, $w$ is
primitive and that $w_{min}$ and $w_{max}$ are cyclically equivalent to $w$.
\qed
\end{proof}

\begin{lemma}\label{l:wminwmax_caseb}
Case b) from the above definition implies that $|w_{min}|=1$.
\end{lemma}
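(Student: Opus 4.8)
The plan is to exploit how restrictive the hypothesis $w_{min}=w_{max}$ really is: it forces \emph{every} cyclic rotation of $w$ to be one and the same word, and such a word must reduce to a single letter because $w$ is primitive. So the proof splits naturally into a lexicographic squeeze followed by an appeal to primitivity.

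First I would record that $w_{min}$ and $w_{max}$ bracket all cyclic rotations of $w$ in lexicographic order: by their very definition, any rotation $r$ of $w$ satisfies $w_{min}\le r\le w_{max}$. Under case b) these two bounds coincide, so the inequality collapses to $r=w_{min}$ for \emph{every} rotation $r$. In particular, $w$ itself and its rotation by one position are both equal to $w_{min}$, hence equal to each other. Writing $w=w_1w_2\ldots w_p$ with $p=|w|=\per(v)$, the equality of $w$ with its one-position rotation $w_2\ldots w_pw_1$ gives, letter by letter, $w_1=w_2=\cdots=w_p$; that is, $w=a^{p}$ for a single letter $a$.

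Finally I would invoke primitivity. As already noted in the proof of Lemma~\ref{l:wminwmax_prime}, the minimality of the period $p$ makes $w$ primitive, and $w_{min}$ is cyclically equivalent to $w$. Since a word of the form $a^{p}$ is primitive only when $p=1$, we conclude $|w_{min}|=|w|=p=1$.

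I do not anticipate a serious obstacle, as the argument is short; the only point demanding care is the very first step, where one must make sure the two-sided squeeze $w_{min}\le r\le w_{max}$ forces \emph{all} rotations to collapse to one word, not merely the two extreme rotations. Once that is granted, the reduction to a single repeated letter and the contradiction with primitivity are immediate.
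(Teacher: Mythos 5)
Your proof is correct, but it runs the contradiction in the opposite direction from the paper, which is a genuinely different (if closely related) organization of the same two ingredients. The paper assumes $|w_{min}|\ge 2$, uses primitivity (established via Lemma \ref{l:wminwmax_prime}) to extract two distinct letters $a=w_{min}[1]$ and $b=w_{min}[i]\ne a$, and then notes that the cyclic rotation by $i-1$ letters is lexicographically smaller than $w_{min}$ when $b<a$, or larger than $w_{max}$ when $b>a$ --- so it is the extremality hypothesis $w_{min}=w_{max}$ that gets violated. You instead take $w_{min}=w_{max}$ as given, use the squeeze $w_{min}\le r\le w_{max}$ together with antisymmetry of $\le$ on words of equal length to force \emph{every} rotation $r$ to equal $w_{min}$, compare $w$ with its one-position rotation to conclude $w=a^p$, and let primitivity deliver the contradiction. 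Neither route is more general than the other, since both rest on primitivity of $w$ (from minimality of $p=\per(v)$) and the extremal characterization of $w_{min}$ and $w_{max}$; what yours buys is the elimination of the paper's case split on $b<a$ versus $b>a$, at the price of the extra (but standard and correctly justified) observation that a word equal to its own one-letter rotation is constant. The step you single out as delicate --- that the squeeze collapses \emph{all} rotations rather than just the extreme ones --- is exactly where antisymmetry of the lexicographic order on equal-length words is needed, and it holds, so the argument is complete.
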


\begin{proof}
$w_{min}$ is primitive, therefore if $|w_{min}|\ge 2$ then $w_{min}$ would
contain at least two distinct letters, $a=w_{min}[1]$ and $b=w_{min}[i]\ne a$.
If $b<a$ ($b>a$) then the cyclic rotation of $w_{min}$ by $i-1$ letters would
be lexicographically smaller (greater) than $w_{min}$ --- a contradiction.
\qed
\end{proof}

\noindent
Note that in case b) of the definition of $F$ obviously $F(v)$ contains at
least two distinct handles.
The following lemma concludes that the same property also holds in case a).

    \begin{figure}[th]
      \begin{center}
        \includegraphics[width=7cm]{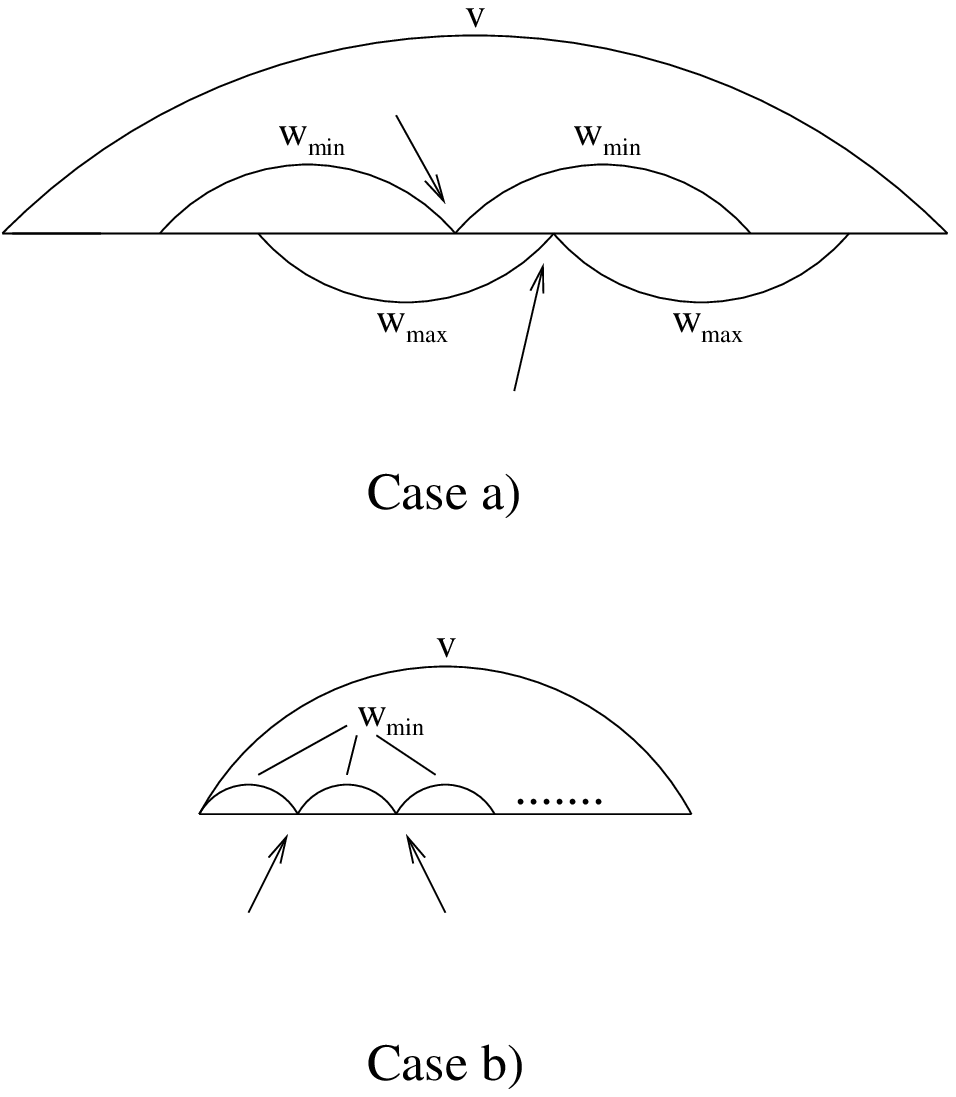}
        \caption{\label{f:wminwmax}
          Illustration of the definition of $F$ and Lemma \ref{l:fdef}.
          The arrows in the figure point to positions from the set of handles
          $F(v)$.
        }
      \end{center}
    \end{figure}

\begin{lemma}\label{l:fdef}
Each of the words $w_{min}^2$ and $w_{max}^2$ is a factor of $v$.
\end{lemma}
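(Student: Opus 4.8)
The plan is to exploit the fact that, by definition of an hp-run, the associated factor $v$ contains at least three full periods, i.e.~$3p\le|v|$, which leaves exactly enough room for a squared cyclic rotation of $w$. First I would record that, since $v$ has period $p=|w|$ and $w=v[1\rdots p]$, the whole of $v$ is a prefix of the infinite repetition $www\cdots$; equivalently, $v[k]=v[k+p]$ for every $1\le k\le|v|-p$. Consequently every cyclic rotation of $w$ occurs inside $v$ at the appropriate offset, and it only remains to verify that there is enough length to fit a full square of such a rotation.

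Next I would make this precise for $w_{min}$. Writing $w=xy$ so that $w_{min}=yx$ (with $0\le|x|\le p-1$), the length-$p$ factor of $v$ starting at position $|x|+1$ equals $yx=w_{min}$, because reading $p$ letters from that position runs through the suffix $y$ of the first copy of $w$ and then the prefix $x$ of the next copy. Extending this window to length $2p$ gives $v[\,|x|+1\rdots|x|+2p\,]=(yx)^2=w_{min}^2$. It then remains to check that this window actually fits inside $v$: its rightmost position is $|x|+2p\le(p-1)+2p=3p-1<3p\le|v|$, so $w_{min}^2$ is indeed a factor of $v$.

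The key step --- and the only place where the hp-run hypothesis (rather than the weaker run hypothesis $2p\le|v|$) is really used --- is precisely this length estimate: a generic rotation $yx$ can begin as late as position $p-1$, so squaring it forces $|v|$ to accommodate nearly three periods. I expect this to be the main, though mild, obstacle, in the sense that one must confirm the bound holds uniformly over all offsets $|x|$, including the boundary case $|x|=p-1$. Finally, the argument for $w_{max}$ is identical: writing $w=x'y'$ with $w_{max}=y'x'$ and locating $w_{max}^2=v[\,|x'|+1\rdots|x'|+2p\,]$ by the same offset-and-length computation completes the proof.
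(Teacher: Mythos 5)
Your proof is correct and follows essentially the same route as the paper: locate an occurrence of the rotation $w_{min}$ (resp.~$w_{max}$) starting within the first $p$ positions of $v$, then use $3p\le|v|$ to check that the square fits, since $|x|+2p\le 3p-1<|v|$. The only difference is cosmetic --- your offset argument handles both cases of the definition of $F$ uniformly, so you never need the paper's separate appeal to Lemma~\ref{l:wminwmax_caseb} for the case $w_{min}=w_{max}$.
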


\begin{proof}
Recall that $3p\le |v|$, where $p=\per(v)$.
By Lemma \ref{l:wminwmax_caseb}, this concludes the proof in case b).
As for the proof in case a), it suffices to note that
the first occurrences of each of the words $w_{min}$, $w_{max}$
within $v$ start non-further than $p$ positions from the beginning of $v$.
\qed
\end{proof}

\noindent
We now show a crucial property of $F$.

\begin{lemma}\label{l:pairs_of_runs}
$F(v_1) \cap F(v_2)=\emptyset$ for every two distinct hp-runs $v_1,v_2$ in
$u$.
\end{lemma}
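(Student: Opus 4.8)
The plan is to reinterpret every handle uniformly and then derive a contradiction from a shared handle using the border-freeness of prime words together with the maximality of runs.

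First I would observe that each handle carries with it a prime word placed symmetrically around it. Concretely, let $x\in F(v)$ be a handle of an hp-run $v$ with period $p$. In case a) the handle $x$ lies between two consecutive occurrences of $w_{min}$ (or of $w_{max}$); since these occurrences are primitive words of length $p$ that are $p$ positions apart and hence adjacent, one of $w_{min},w_{max}$ --- call it $q$ --- ends exactly at position $x$ and starts again exactly at position $x+1$. In case b) we have $w_{min}=w_{max}$ equal to a single letter $a$ (Lemma~\ref{l:wminwmax_caseb}), and every inter-position within $v$ has $a$ on both sides, so again a prime word $q=a$ ends at $x$ and starts at $x+1$. In all cases the factor $q^{2}=u[x-p+1\rdots x+p]$ occurs in $u$, lies entirely inside $v$, and $q$ is prime by Lemma~\ref{l:wminwmax_prime}.

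Next, suppose towards a contradiction that some inter-position $x$ belongs to $F(v_1)\cap F(v_2)$. By the previous paragraph there are prime words $q_1,q_2$ (of lengths $p_1=\per(v_1)$ and $p_2=\per(v_2)$) each of which ends at $x$ and starts at $x+1$. Assuming without loss of generality that $|q_1|\le|q_2|$, the coincidence of their right boundaries makes $q_1$ a suffix of $q_2$, while the coincidence of their left boundaries makes $q_1$ a prefix of $q_2$. But a prime word admits no proper prefix that is also a suffix, so $q_1=q_2$; in particular $p_1=p_2=:p$ and $u[x-p+1\rdots x+p]$ is a common factor $q^{2}$ of period $p$ contained in both $v_1$ and $v_2$. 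This step --- forcing the two periods to coincide via border-freeness of prime words --- is the crux of the argument.

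Finally I would invoke the maximality of runs. Both $v_1$ and $v_2$ are runs of period $p$ whose intervals contain the length-$2p$ window $[x-p+1\rdots x+p]$. The maximal interval with period $p$ containing a fixed window is unique: one extends it leftward and rightward exactly as long as the period-$p$ relation $u_i=u_{i+p}$ persists. Since a run cannot be extended while preserving its period, each of $v_1,v_2$ must coincide with this unique maximal interval, whence $v_1=v_2$, contradicting their distinctness. I expect the only delicate point here to be checking that the shared window lies strictly inside both runs so that their non-extendability may be applied; this is guaranteed by the inclusion $q^{2}\subseteq v_i$ established in the first step.
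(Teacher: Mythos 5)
Your proposal is correct and follows essentially the same route as the paper: a shared handle sits at the center of two squares $q_1^2$, $q_2^2$ of prime words, and border-freeness of prime words (a prime word has no proper prefix equal to a suffix) rules out $q_1\ne q_2$, while $q_1=q_2$ forces $v_1=v_2$ via the uniqueness of the maximal period-$p$ extension. The paper merely organizes the same two facts in the opposite order (dismissing $w_1=w_2$ in one line and deriving the contradiction from $|w_1|<|w_2|$), and you usefully spell out the details it leaves implicit.
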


\begin{proof}
Assume to the contrary that $p_i \in F(v_1) \cap F(v_2)$ is a handle of two
different runs $v_1$ and $v_2$.
By Lemmas \ref{l:wminwmax_prime} and \ref{l:fdef}, $p_i$ is located in the
middle of two squares $w_1^2$ and $w_2^2$ of prime words, where
$|w_1|=\per(v_1)$ and $|w_2|=\per(v_2)$.
$w_1\ne w_2$, since in the opposite cases runs $v_1$ and $v_2$ would be the
same.
W.l.o.g.~assume that $|w_1|<|w_2|$.
Then, word $w_1$ is both a prefix and a suffix of $w_2$
(see fig.~\ref{f:prefsuf}), what contradicts the primality of $w_2$.
\qed

     \begin{figure}[th]
      \begin{center}
        \includegraphics[width=7.5cm]{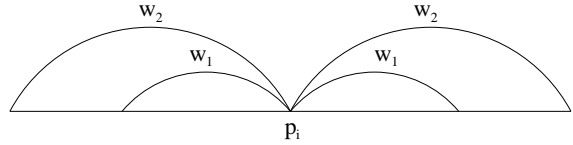}
        \caption{\label{f:prefsuf}
          A situation where $p_i$ is in the middle of two different squares
          $w_1^2$ and $w_2^2$.
        }
      \end{center}
    \end{figure}
\end{proof}

\noindent
The following theorem concludes the analysis of the upper bound.

\begin{theorem}
A word $u\in A^*$ of length $n$ may contain at most $\frac{n-1}2$ runs.
\end{theorem}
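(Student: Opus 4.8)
The plan is to finish with a direct counting argument built on the two properties of $F$ that the preceding lemmas supply. Since $F$ maps the hp-runs of $u$ into $2^P$ and $P$ contains exactly $n-1$ inter-positions, everything will follow once I know that each $F(v)$ has at least two elements and that distinct hp-runs receive disjoint handle sets.

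First I would check that $|F(v)|\ge 2$ for every hp-run $v$. In case~b) this is the already-noted observation: Lemma~\ref{l:wminwmax_caseb} gives $p=|w_{min}|=1$, so $v$ is a block of one repeated letter of length at least $3$, and $F(v)$ collects its (at least two) inner inter-positions. In case~a), Lemma~\ref{l:fdef} places both $w_{min}^2$ and $w_{max}^2$ inside $v$, so the midpoint of an occurrence of $w_{min}^2$ contributes one handle and the midpoint of an occurrence of $w_{max}^2$ another; these differ because the length-$p$ factor immediately preceding a $w_{min}^2$-midpoint is $w_{min}$, that preceding a $w_{max}^2$-midpoint is $w_{max}$, and $w_{min}\ne w_{max}$. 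With the disjointness from Lemma~\ref{l:pairs_of_runs} in hand, I would then sum the sizes of the handle sets:
\[
  2\cdot\hpruns(n)\ \le\ \sum_{v}|F(v)|\ =\ \Bigl|\bigcup_{v}F(v)\Bigr|\ \le\ |P|\ =\ n-1,
\]
where the equality is disjointness and the final inequality is that every handle lies in $P$. Dividing by $2$ gives $\hpruns(n)\le\frac{n-1}{2}$.

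The theorem itself is thus only a pigeonhole step; the real weight has already been carried by the construction of $F$ and by Lemma~\ref{l:pairs_of_runs}, whose proof crucially exploits that a prime word has no proper border. For the present step the only point needing care is the two-handles guarantee in case~a), and I expect that to be the mild obstacle --- it is settled precisely by the distinctness of $w_{min}$ and $w_{max}$ together with Lemma~\ref{l:fdef}.
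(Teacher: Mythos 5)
Your proposal is correct and follows exactly the paper's route: the theorem is the pigeonhole step combining $|F(v)|\ge 2$ (case~b) directly, case~a) via Lemma~\ref{l:fdef}) with the disjointness of handle sets from Lemma~\ref{l:pairs_of_runs}, giving $2\cdot\hpruns(n)\le |P|=n-1$. Your extra observation that the two midpoints in case~a) must differ because the length-$p$ factor preceding each one is $w_{min}$ respectively $w_{max}$ is a detail the paper leaves implicit, and it is a correct justification.
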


\begin{proof}
Due to Lemma \ref{l:fdef}, for each hp-run $v$ within $u$, $|F(v)| \ge 2$.
Since $|P|=n-1$, Lemma \ref{l:pairs_of_runs} implies the conclusion of the
theorem.
\qed
\end{proof}

\section{Lower bound}

\begin{lemma}\label{l:improvement}
Let $s$ be a word and denote:
$$r = \hpruns(s), \quad \ell = |s|$$
There exists a sequence of words $(s_n)_{n=0}^{\infty}$, $s_0=s$, such that
$$r_n = \hpruns(s_n), \quad \ell_n = |s_n| \quad\mbox{and}
  \quad \lim_{n\rightarrow\infty} \frac{r_n}{\ell_n} = \frac{r}\ell+\frac{1}{5\ell}$$
\end{lemma}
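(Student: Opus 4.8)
The plan is to exhibit the family explicitly in the form $s_n = s\,B^n$, where $B$ is a single block of length $|B| = 5\ell$ built out of $s$, and then to reduce the whole statement to a per-block counting identity. Granting that appending one copy of $B$ always increases the number of hp-runs by exactly $5r+1$, i.e.
$$\hpruns(s B^{n+1}) = \hpruns(s B^{n}) + (5r+1),$$
we immediately get $\ell_n = \ell(5n+1)$ and $r_n = r + (5r+1)n$, and therefore
$$\frac{r_n}{\ell_n} = \frac{r+(5r+1)n}{\ell(5n+1)} \;\xrightarrow[n\to\infty]{}\; \frac{5r+1}{5\ell} = \frac{r}{\ell}+\frac{1}{5\ell},$$
which is the claimed limit. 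So all the work lives in the single-step identity, which I would then promote to $\hpruns(sB^n)=r+(5r+1)n$ by induction on $n$.

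To realise this I would take $B$ to consist of five pieces, each a cyclic rotation (or otherwise carefully chosen copy) of $s$, directly concatenated so that $|B| = 5\ell$ exactly --- but deliberately \emph{not} the plain power $s^5$. The reason to avoid $s^5$ is that a plain power carries one global run of period $\ell$ that, under repeated concatenation, merges into a single ever-growing run and thus contributes only an additive constant rather than one run per block; breaking this global period forces the extra periodicity created inside each block to be \emph{localized and bounded}, so that it recurs once per block and survives all further concatenation. With this choice the count of $\hpruns(sB^n)$ should split into three parts: the hp-runs lying inside the five pieces, which reproduce the $r$ hp-runs of $s$ five times over and give $5r$ per block; exactly one new bounded hp-run per block, arising from the periodic seam engineered between two of the pieces; and a residual part that I must show is empty.

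The residual part is the heart of the argument and the step I expect to be the main obstacle: I must prove that no hp-run straddles a seam with some short period other than the single intended one, and that none of the reproduced interior runs gets extended or merged across a seam. Here the primality machinery already developed is exactly the right tool. If a straddling factor had a short period $q$, then the length-$q$ prefix of the associated prime word (Lemma~\ref{l:wminwmax_prime}) would also be its suffix, contradicting the fact --- used already in the proof of Lemma~\ref{l:pairs_of_runs} --- that a prime word has no proper prefix equal to a suffix. By choosing the rotations inside $B$ so that the letters flanking each seam break every unwanted periodicity, I can rule out spurious straddling runs and pin the extra contribution down to exactly $1$; the boundedness of that extra run is what guarantees it is replicated, rather than absorbed, when the next block is appended. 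A clean alternative realisation is a recursive blow-up $s_{n+1}$ obtained from $s_n$ by multiplying the length by $6$ while introducing one persistent interior hp-run (the balance $\lambda-1 = 5c$ with $\lambda=6$, $c=1$ again yielding the factor $\tfrac15$), but the additive-block version keeps the bookkeeping most transparent, so that is the route I would carry through.
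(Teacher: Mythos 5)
There is a genuine gap, and it sits exactly where you predicted: the single-step identity $\hpruns(sB^{n+1})=\hpruns(sB^n)+5r+1$ is never established, and in fact no block $B$ of the form you propose can exist for an arbitrary word $s$. You require $B$ to be a concatenation of five cyclic rotations (or similar copies) of $s$ over the \emph{same} alphabet, with seams chosen so that rotations ``break every unwanted periodicity.'' Take $s=a^3$ --- precisely the seed word the paper feeds into this lemma to get the $0.4n$ bound. Every rotation of $s$ is $a^3$, so necessarily $B=a^{15}$ and $sB^n=a^{3+15n}$, which contains exactly \emph{one} hp-run for every $n$, not $r+(5r+1)n$. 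More generally, over a fixed alphabet there is no way to guarantee either of your two bookkeeping claims: rotations need not preserve the hp-run count of $s$ (a run at the wrap point can be split or destroyed), and hp-runs that touch a seam between two same-alphabet pieces can extend or merge across it, so the ``$5r$ interior runs per block'' figure is unjustified. Your appeal to the primality machinery does not repair this: Lemma~\ref{l:wminwmax_prime} and the border-free property are about the minimal/maximal rotations of a run's \emph{own} period word (used to make handles disjoint in Lemma~\ref{l:pairs_of_runs}); they say nothing about a straddling factor at a seam, whose period word is arbitrary and need not be prime, so the contradiction you sketch does not go through.

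The paper sidesteps all of these obstructions by \emph{enlarging the alphabet}: it sets $s_{n+1}=(s_n\overline{s_n})^3$, where $\overline{s_n}$ is a copy of $s_n$ over a disjoint alphabet. Disjointness kills every straddling and merging phenomenon at once --- no run can cross an $A$/$\overline{A}$ boundary except the one new cube $(s_n\overline{s_n})^3$ itself --- giving the exact recurrences $\ell_n=6\ell_{n-1}$ and $r_n=6r_{n-1}+1$, whence $r_n/\ell_n=\frac{r}{\ell}+\frac{1}{\ell}\sum_{i=1}^n 6^{-i}\to\frac{r}{\ell}+\frac{1}{5\ell}$. Note that this is essentially the ``recursive blow-up with $\lambda=6$, $c=1$'' you mention in your last sentence and then set aside; your limit arithmetic for the additive version is fine, but the multiplicative, alphabet-doubling version is the one whose run count can actually be verified, because the disjoint-alphabet trick replaces all the seam engineering you would otherwise have to do (and, as the unary example shows, cannot do in general).
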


\begin{proof}
We define the sequence $s_n$ recursively.
Denote $A=\Alp(s_n)$ and let $\overline{A}$ be a disjoint copy of $A$.
By $\overline{s_n}$ we denote the word obtained from $s_n$ by substituting
letters from $A$ with the corresponding letters from $\overline{A}$.
We define $s_{n+1} = (s_n\overline{s_n})^3$.

Recall that $\ell_0=\ell$, $r_0=r$ and note that for $n\ge 1$
$$\ell_n=6\ell_{n-1}, \quad r_n=6r_{n-1}+1$$
By simple induction this concludes that
$$\frac{r_n}{\ell_n} \quad=\quad
  \frac{r}\ell + \frac{1}{\ell} \sum_{i=1}^n \frac{1}{6^i} \quad=\quad
  \frac{r}\ell + \frac{1}{5\ell}\Bigl(1-\frac{1}{6^{n+1}}\Bigr)$$
Taking $n\rightarrow\infty$ in the above formula we obtain the conclusion
of the lemma.
\qed
\end{proof}

\noindent
Starting with the 3-letter word $s=a^3$ for which $r/\ell=1/3$, from Lemma
\ref{l:improvement} we obtain the bound $0.4 n$.
This bound is, however, not optimal --- we will show an example of a sequence
of words for which we obtain the bound $0.406 n$.

Let $A=\{a,b\}$.
We denote:
  $$X=\left(a^3b^3\right)^3, \quad
  Y=a^4b^3a, \quad
  \alpha=XY, \quad
  \beta=Xa$$

\begin{lemma}\label{l:props_alphabeta}
A couple of important properties of words $\alpha$ and $\beta$:
\begin{itemize}
  \item $XYX$ introduces a new hp-run with the period 7.
  Hence, each of the pairs $\alpha\alpha$ and $\alpha\beta$ introduces a new
  hp-run.
  \item $\beta$ is a prefix of $\alpha$.
  Hence, $\alpha\beta\alpha\beta\alpha\alpha$ introduces the hp-run
  $(\alpha\beta)^3$.
  \item $Y$ is a prefix of $aX$, therefore $\alpha$ is a prefix of
  $\beta\alpha$.
  Hence, $\alpha\alpha\beta\alpha$ introduces the hp-run $\alpha^3$.
\end{itemize}
\end{lemma}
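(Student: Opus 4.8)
The plan is to treat the three items separately, since the first is a concrete periodicity computation whereas the other two reduce to elementary prefix relations together with a check of the period/length condition for an hp-run. Throughout I would work with the explicit letter sequences $X=a^3b^3a^3b^3a^3b^3$ (length $18$), $Y=a^4b^3a$ (length $8$), $\alpha=XY$ (length $26$) and $\beta=Xa$ (length $19$), and index the positions of $XYX$ from $1$ to $44$.

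For the first item I would write out the region of $XYX$ around the inserted $Y$. The last block $a^3b^3$ of the first $X$ (positions $13$--$18$), the whole of $Y$ (positions $19$--$26$) and the initial $a^3b^3a^3$ of the second $X$ (positions $27$--$35$) spell the factor
$$a^3b^3\; a^4b^3a\; a^3b^3a^3 \;=\; (a^3b^3a)^3 a^2,$$
which has length $23$ and, since $a^3b^3a$ is primitive, shortest period $7$. I would then verify maximality: to the left, position $12$ carries $b$ while position $19$ carries $a$, so the period-$7$ pattern breaks; to the right, position $36$ carries $b$ while position $29$ carries $a$, so it breaks there too. As $23\ge 21=3\cdot 7$, this factor is an hp-run, and because reaching length $\ge 21$ requires the second copy of $X$, it is not an hp-run inside a single $\alpha$ or $\beta$ and is therefore new. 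Finally, $XYX$ is a prefix of $\alpha\alpha=XYXY$ and of $\alpha\beta=XYXa$, and in both words position $36$ still carries $b$, so the run survives unchanged; this yields the two ``hence'' consequences. This explicit bookkeeping is the main obstacle, the rest being routine.

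For the second item, since $Y=a^4b^3a$ begins with $a$, the word $\alpha=XY$ begins with $Xa=\beta$, so $\beta$ is a prefix of $\alpha$. Writing $\alpha=\beta\,a^3b^3a$, the final block of $\alpha\beta\alpha\beta\alpha\alpha$ begins with $\beta$, so its length-$135$ prefix equals $\alpha\beta\alpha\beta\alpha\beta=(\alpha\beta)^3$. A short computation shows $\alpha\beta$ is primitive, so this factor has shortest period $|\alpha\beta|=45$; since $3\cdot 45=135$ equals its length, it is an hp-run.

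For the third item, I would first note that $aX=a^4b^3a^3b^3a^3b^3$ begins with $a^4b^3a=Y$, so $Y$ is a prefix of $aX$, hence of $aXY$. Then $\alpha=XY$ is a prefix of $XaXY=\beta\alpha$. Consequently the final $\beta\alpha$ of $\alpha\alpha\beta\alpha$ begins with $\alpha$, so the length-$78$ prefix equals $\alpha^3$; as $\alpha$ is primitive this factor has shortest period $|\alpha|=26$, and $3\cdot 26=78$ equals its length, so it is an hp-run. In each of the last two items one should, for completeness, also check that the identified factor cannot be extended past the stated boundary, which is immediate from its prefix position on the left and a single letter comparison on the right.
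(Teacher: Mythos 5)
Your verification is correct, and the key data all check out: the factor of $XYX$ at positions $13$--$35$ is indeed $(a^3b^3a)^3a^2$ of length $23$, the boundary comparisons $u[12]=b\ne a=u[19]$ and $u[36]=b\ne a=u[29]$ establish maximality, and the prefix relations ($\beta=Xa$ is a prefix of $XY=\alpha$ because $Y$ starts with $a$; $\alpha$ is a prefix of $\beta\alpha=XaXY$ because $Y$ is a prefix of $aX$) yield the factors $(\alpha\beta)^3$ and $\alpha^3$ exactly as you describe. There is, however, nothing in the paper to compare your argument against: the authors state this lemma with no proof whatsoever, evidently regarding it as a routine letter-level check, so your write-up supplies precisely the verification they leave implicit. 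Two steps you pass over lightly deserve a word. First, the inference ``$a^3b^3a$ is primitive, hence the shortest period is $7$'' (and likewise shortest periods $45$ and $26$ for $(\alpha\beta)^3$ and $\alpha^3$) tacitly invokes the Fine--Wilf periodicity lemma: a shorter period $p$ together with the given one would force period $\gcd(p,q)$, a proper divisor of $q$, contradicting primitivity of the root --- harmless, but worth stating since the lemma's conclusions are about \emph{shortest} periods. Second, since these blocks are later used inside the longer words $f_n$, the run containing $(\alpha\beta)^3$ or $\alpha^3$ may extend past the displayed factor; this does not hurt, because any extension still contains a factor whose shortest period is $45$ (resp.\ $26$), so the run keeps that period and remains a genuinely new hp-run crossing the junction, which is all the counting in the subsequent theorem requires. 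Your closing remark correctly flags this maximality issue, so the proposal stands as a complete and sound proof of the asserted properties.
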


\noindent
Now we will also be dealing with a new alphabet $A'=\{\alpha,\beta\}$.
We define the Fibonacci morphism $h$ as:
$$h(\alpha)=\alpha\beta, \quad h(\beta)=\alpha$$
Let
$$f_n=h^n(\alpha), \quad r_n = \hpruns(f_n), \quad \ell_n = |f_n|$$

\begin{center}
\begin{tabular}{|r|r|r|r|p{5.5cm}|}
\hline
\ \ $n$ & \ \ \ \ $r_n$ & \ \ \ \ $\ell_n$ & \ \ \ \ \ \ $r_n/\ell_n$ & $f_n$ \\\hline\hline
0 & $9$ & $26$ & $0.3462$ & $\alpha$ \\\hline
1 & $17$ & $45$ & $0.3778$ & $\alpha\beta$ \\\hline
2 & $26$ & $71$ & $0.3662$ & $\alpha\beta\alpha$ \\\hline
3 & $45$ & $116$ & $0.3879$ & $\alpha\beta\alpha\alpha\beta$ \\\hline
4 & $71$ & $187$ & $0.3796$ & $\alpha\beta\alpha\alpha\beta\alpha\beta\alpha$ \\\hline
5 & $119$ & $303$ & $0.3927$ & $\alpha\beta\alpha\alpha\beta\alpha\beta\alpha\alpha\beta\alpha\alpha\beta$ \\\hline
6 & $192$ & $490$ & $0.3918$ &
  $\alpha\beta\alpha\alpha\beta\alpha\beta\alpha\alpha\beta\alpha\alpha\beta\alpha\beta\alpha\alpha\beta\alpha\beta\alpha$ \\\hline
\end{tabular}

\medskip

Table 1: A first few words of the sequence $f_n$ with the corresponding \\ terms of sequences $r_n$ and $\ell_n$.
\end{center}

\vskip 1cm
\begin{theorem}
$$\lim_{n\rightarrow\infty} \frac{r_n}{\ell_n} > 0.406$$
In particular,
$$\frac{r_{19}}{\ell_{19}} \ge \frac{103\,664}{255\,329} > 0.406$$
\end{theorem}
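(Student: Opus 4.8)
The plan is to establish a recurrence for the sequence $r_n = \hpruns(f_n)$ and then use it to compute $r_{19}/\ell_{19}$ explicitly, thereby certifying the claimed numerical bound. The lengths $\ell_n$ are immediate: since $f_n = h^n(\alpha)$ and $h$ is the Fibonacci morphism, the number of occurrences of $\alpha$ and $\beta$ in $f_n$ follow consecutive Fibonacci numbers, and because $|\alpha| = |XY| = 18 + 8 = 26$ and $|\beta| = |Xa| = 18 + 1 = 19$, we get $\ell_n$ as a fixed linear combination of two consecutive Fibonacci numbers. So the entire difficulty is concentrated in counting the hp-runs of $f_n$.

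\medskip\noindent
For the hp-run count I would split the runs of $f_n$ into two groups. Since $f_n$ is a concatenation of blocks $\alpha$ and $\beta$, every hp-run is either \emph{internal} --- contained within a single block $\alpha$ or $\beta$ --- or \emph{crossing} a block boundary. The internal runs are handled by the base data in Table 1 together with the fact that $h$ replaces each $\alpha$ by $\alpha\beta$ and each $\beta$ by $\alpha$: applying $h$ multiplies the internal structure faithfully, so the count of internal runs in $f_{n+1}$ is a Fibonacci-type sum of the internal counts of $\alpha$ and $\beta$ from the previous level. The crossing runs are exactly the new repetitions created at the seams between blocks, and these are precisely the runs catalogued in Lemma~\ref{l:props_alphabeta}: each occurrence of the factor $\alpha\alpha$, each $\alpha\beta$, each pattern $\alpha\beta\alpha\beta\alpha\alpha$ (giving the run $(\alpha\beta)^3$), and each $\alpha\alpha\beta\alpha$ (giving $\alpha^3$) contributes one new hp-run. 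Thus the increment $r_{n+1}-\big(\text{internal part}\big)$ is determined by counting occurrences of these short block-patterns in the sequence $h^{n+1}(\alpha)$ read as a word over $\{\alpha,\beta\}$.

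\medskip\noindent
The key technical step is therefore to verify that the factor structure over $A' = \{\alpha,\beta\}$ is the Fibonacci word itself, and that the patterns of Lemma~\ref{l:props_alphabeta} exhaust the crossing runs --- in particular that no two distinct seam-patterns give rise to the same hp-run and that no crossing hp-run arises other than through these patterns. Once this is confirmed, the number of occurrences of each fixed finite factor in a prefix of the Fibonacci word is a classically known linear-recurrent quantity, so $r_n$ satisfies a linear recurrence with Fibonacci-type coefficients. I would extract this recurrence from the first seven rows of Table~1 (which supply enough initial data to pin down the coefficients), check that it correctly reproduces all tabulated values as a consistency test, and then iterate it deterministically up to $n=19$. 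Evaluating the resulting exact integers yields $r_{19} = 103\,664$ and $\ell_{19} = 255\,329$, and a direct comparison $103\,664 \cdot 1000 > 406 \cdot 255\,329$ confirms the ratio exceeds $0.406$.

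\medskip\noindent
The main obstacle I expect is the careful bookkeeping of the crossing runs: one must be certain that the four seam-patterns are mutually non-overlapping in the runs they produce and that Lemma~\ref{l:props_alphabeta} is genuinely complete, i.e.\ that the Fibonacci word over $\{\alpha,\beta\}$ contains no longer repetition of $\alpha$- or $\beta$-blocks that would spawn an additional, uncounted hp-run. This completeness argument relies on the fact that the Fibonacci word avoids fourth powers and has a very restricted set of factors, so the only block-level repetitions are the squares $\alpha^2$, the patterns giving $(\alpha\beta)^3$ and $\alpha^3$, and nothing higher; establishing this rigorously, rather than merely consistently with the table, is where the real care is needed.
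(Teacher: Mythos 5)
Your plan has a genuine gap, and it is concentrated exactly where you located the difficulty. You assert that the Fibonacci word over $\{\alpha,\beta\}$ contains no block-level repetitions beyond those giving $\alpha^2$ (via $\alpha\alpha$, $\alpha\beta$), $(\alpha\beta)^3$ and $\alpha^3$, ``and nothing higher.'' This is false: although the Fibonacci word avoids fourth powers, it does contain cubes of \emph{all} the words $f_k$ up to order roughly $n-4$. Indeed $h(\alpha\beta\alpha\beta\alpha\alpha)$ contains $f_2^3=(\alpha\beta\alpha)^3$, and iterating $h$ produces occurrences of $f_3^3, f_4^3,\ldots,f_{n-4}^3$ inside $f_n$. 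The paper's proof counts precisely this family --- $n-5$ additional hp-runs at step $n$ --- which is why its recurrence has an increment \emph{growing linearly in $n$} (with a parity correction): $r_n \ge r_{n-1}+r_{n-2}+n-4$ for even $n$ and $r_n \ge r_{n-1}+r_{n-2}+n-2$ for odd $n$. Your catalogue of four short seam patterns misses this entire family, so an exact computation built on it would undercount and fail to reach $103\,664$ at $n=19$. The fallback you propose --- extracting a linear recurrence by fitting coefficients to the seven rows of Table~1 and checking consistency --- is not a proof: seven data points cannot certify a recurrence for all $n$, particularly when the true increment is non-constant, and consistency with a table establishes nothing. A secondary problem is your internal/crossing decomposition at the level of single blocks: maximality of runs is relative to the whole word, and runs do cross block boundaries non-trivially (e.g.~the trailing $a$ of $\beta$ merges with the leading $a^3$ of the following $X$; this is the content of ``$Y$ is a prefix of $aX$'' in Lemma~\ref{l:props_alphabeta}), so ``internal to a block'' is not a well-defined category and the internal counts are not simply additive under $h$.

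You are also solving a harder problem than the theorem requires. The statement is a lower bound, so completeness of any run catalogue is unnecessary (and even the paper never claims $r_{19}=103\,664$ exactly --- only $r_{19}\ge 103\,664$). The paper's route is: use the top-level factorization $f_n=f_{n-1}f_{n-2}$, together with the bound $2+\Phi<4$ on exponents of repetitions in Fibonacci words, to conclude that the hp-runs of the two parts remain distinct hp-runs of $f_n$, giving $r_n\ge r_{n-1}+r_{n-2}$; then \emph{exhibit} the extra runs --- one boundary run for even $n$, three for odd $n$ (from Lemma~\ref{l:props_alphabeta}), plus the $n-5$ cubes $f_2^3,\ldots,f_{n-4}^3$. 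Iterating these inequalities from the tabulated seeds yields $r_{19}/\ell_{19}\ge 103\,664/255\,329>0.406$, and since $\ell_n=\ell_{n-1}+\ell_{n-2}$, the mediant inequality propagates the bound to all larger $n$ and hence to the limit. Your proposal could be repaired by abandoning exactness and completeness and instead proving the inequality by exhibiting runs, but as written it rests on a false structural claim about the factors of the Fibonacci word and on an unproved, data-fitted recurrence.
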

\begin{proof}
We start with the values $\ell_n,r_n$ for $n\le 4$ that are precomputed in
Table 1 and show that for $n\ge 5$ the following recursive formulas hold:
\begin{eqnarray}
\label{e:ln}
  \ell_n & = & \ell_{n-1}+\ell_{n-2} \\
\label{e:rn_even}
  r_n & \ge & r_{n-1}+r_{n-2}+n-4 \quad\quad\mbox{\it if}\quad 2 \mid n \\
\label{e:rn_odd}
  r_n & \ge & r_{n-1}+r_{n-2}+n-2 \quad\quad\mbox{\it if}\quad 2 \nmid n
\end{eqnarray}
The ``in particular'' part of the lemma is a straightforward consequence of
the formulas.

\eqref{e:ln} is obvious, therefore we concentrate on the inequalities for
$r_n$.
The recursive part of each of them ($r_{n-1}+r_{n-2}$) is a consequence of
the formula $f_n=f_{n-1}f_{n-2}$ and the fact that Fibonacci words contain
repetitions of exponent at most $2+\Phi < 4$, see \cite{MignosiPirillo}.
Due to Lemma \ref{l:props_alphabeta}, for even values of $n$ a new hp-run is
introduced upon concatenation --- see the example for $n=6$:
$$\alpha\beta\alpha\alpha\beta\alpha\beta\alpha\alpha\beta\alpha\underbrace{\alpha\beta | \alpha\beta\alpha\alpha}_{}\beta\alpha\beta\alpha$$
and for odd values of $n$, three more hp-runs appear, as in the following
example for $n=5$:
$$\alpha\beta\alpha\alpha\beta\alpha\beta\underbrace{\alpha | \alpha}_{}\beta\alpha\alpha\beta$$
$$\alpha\beta\alpha\alpha\beta\alpha\beta\underbrace{\alpha | \alpha\beta\alpha}_{}\alpha\beta$$
$$\alpha\beta\alpha\underbrace{\alpha\beta\alpha\beta\alpha | \alpha}_{}\beta\alpha\alpha\beta$$
Apart from that, since
$$h(\alpha\beta\alpha\beta\alpha\alpha) = \underbrace{\alpha\beta\alpha\alpha\beta\alpha\alpha\beta\alpha}_{}\beta$$
contains a hp-run $f_2^3$, word $f_n$ introduces $n-5$ new hp-runs composed
form $f_2^3, f_3^3, \ldots, f_{n-4}^3$, each created by iterating
$h^i(\alpha\beta\alpha\beta\alpha\alpha)$ --- see the example for $n=7$:
$$\alpha\beta\alpha\alpha\beta\alpha\beta\alpha\alpha\beta\alpha\alpha\beta\alpha\beta\alpha\alpha\beta\underbrace{\alpha\beta\alpha |
  \alpha\beta\alpha\alpha\beta\alpha}_{}\beta\alpha\alpha\beta\alpha\alpha\beta$$  
$$\alpha\beta\alpha\alpha\beta\alpha\beta\alpha\underbrace{\alpha\beta\alpha\alpha\beta\alpha\beta\alpha\alpha\beta\alpha\beta\alpha |
  \alpha\beta}_{}\alpha\alpha\beta\alpha\beta\alpha\alpha\beta\alpha\alpha\beta$$  
In total, we obtain $n-4$ new hp-runs for even $n$ and $n-2$ for odd $n$, what
concludes the proof of the inequalities.
\qed
\end{proof}

\bibliographystyle{abbrv}
\bibliography{hrruns}

\begin{thebibliography}{10}

\bibitem{BaturoPiatkowskiRytter}
P.~Baturo, M.~Piatkowski, and W.~Rytter.
\newblock The number of runs in sturmian words.
\newblock In O.~H. Ibarra and B.~Ravikumar, editors, {\em CIAA}, volume 5148 of
  {\em Lecture Notes in Computer Science}, pages 252--261. Springer, 2008.

\bibitem{Karhumaki}
J.~Berstel and J.~Karhumaki.
\newblock Combinatorics on words: a tutorial.
\newblock {\em Bulletin of the EATCS}, 79:178--228, 2003.

\bibitem{CrochemoreIlie}
M.~Crochemore and L.~Ilie.
\newblock Analysis of maximal repetitions in strings.
\newblock In L.~Kucera and A.~Kucera, editors, {\em MFCS}, volume 4708 of {\em
  Lecture Notes in Computer Science}, pages 465--476. Springer, 2007.

\bibitem{CrochemoreIlie08}
M.~Crochemore and L.~Ilie.
\newblock Maximal repetitions in strings.
\newblock {\em J. Comput. Syst. Sci.}, 74(5):796--807, 2008.

\bibitem{Survey}
M.~Crochemore, L.~Ilie, and W.~Rytter.
\newblock Repetitions in strings: algorithms and combinatorics.
\newblock {\em Theoret. Comput. Sci. (to appear)}.

\bibitem{DBLP:conf/cpm/CrochemoreIT08}
M.~Crochemore, L.~Ilie, and L.~Tinta.
\newblock Towards a solution to the "runs" conjecture.
\newblock In P.~Ferragina and G.~M. Landau, editors, {\em CPM}, volume 5029 of
  {\em Lecture Notes in Computer Science}, pages 290--302. Springer, 2008.

\bibitem{CrochemoreR95}
M.~Crochemore and W.~Rytter.
\newblock Squares, cubes, and time-space efficient string searching.
\newblock {\em Algorithmica}, 13(5):405--425, 1995.

\bibitem{FranekYang}
F.~Franek and Q.~Yang.
\newblock An asymptotic lower bound for the maximal number of runs in a string.
\newblock {\em Int. J. Found. Comput. Sci.}, 19(1):195--203, 2008.

\bibitem{Giraud08}
M.~Giraud.
\newblock Not so many runs in strings.
\newblock In C.~Mart\'{\i}n-Vide, F.~Otto, and H.~Fernau, editors, {\em LATA},
  volume 5196 of {\em Lecture Notes in Computer Science}, pages 232--239.
  Springer, 2008.

\bibitem{KolpakovKucherov}
R.~M. Kolpakov and G.~Kucherov.
\newblock Finding maximal repetitions in a word in linear time.
\newblock In {\em Proceedings of the 40th Symposium on Foundations of Computer
  Science}, pages 596--604, 1999.

\bibitem{Iwoca}
M.~Kubica, J.~Radoszewski, W.~Rytter, and T.~Walen.
\newblock On the maximal number of cubic subwords in a string.
\newblock In {\em Proceedings of the 20th International Workshop on
  Combinatorial Algorithms (to appear)}, 2009.

\bibitem{Matsubara}
K.~Kusano, W.~Matsubara, A.~Ishino, H.~Bannai, and A.~Shinohara.
\newblock New lower bounds for the maximum number of runs in a string.
\newblock {\em CoRR}, abs/0804.1214, 2008.

\bibitem{Lothaire}
M.~Lothaire.
\newblock {\em Combinatorics on Words}.
\newblock Addison-Wesley, Reading, MA., U.S.A., 1983.

\bibitem{MignosiPirillo}
F.~Mignosi and G.~Pirillo.
\newblock Repetitions in the fibonacci infinite word.
\newblock {\em ITA}, 26:199--204, 1992.

\bibitem{Rytter06}
W.~Rytter.
\newblock The number of runs in a string: Improved analysis of the linear upper
  bound.
\newblock In B.~Durand and W.~Thomas, editors, {\em STACS}, volume 3884 of {\em
  Lecture Notes in Computer Science}, pages 184--195. Springer, 2006.

\bibitem{RytterFib}
W.~Rytter.
\newblock The structure of subword graphs and suffix trees in fibonacci words.
\newblock {\em Theor. Comput. Sci.}, 363(2):211--223, 2006.

\bibitem{Rytter07}
W.~Rytter.
\newblock The number of runs in a string.
\newblock {\em Inf. Comput.}, 205(9):1459--1469, 2007.

\bibitem{Simpson}
J.~Simpson.
\newblock Modified padovan words and the maximum number of runs in a word.
\newblock {\em Australasian Journal of Combinatorics (to appear)}.

\end{thebibliography}

\end{document}